%
%
%

\documentclass[a4paper]{llncs}

\usepackage[english]{babel}
\usepackage[utf8]{inputenc}
\usepackage{hyperref}
\usepackage{enumerate}

\usepackage{amsmath}
\usepackage{amssymb}

\usepackage{listings}
\lstdefinelanguage{pseudo}{
	morekeywords={if, then, else, while, foreach, do, assume, assert, let, return},
	sensitive=false,
	morecomment=[l]{//},
	morecomment=[s]{/*}{*/},
	morestring=[b]",
}
\lstset{
	mathescape=true,
	language=pseudo
}

\AtBeginDocument{

}

\newcommand{\loopt}{{\scriptstyle\mathrm{LOOP}}} 
\newcommand{\T}{{\scriptstyle\mathrm{T}}} 
\newcommand{\abovebelow}[2]{(^{#1}_{#2})} 
\newcommand{\tr}[1]{#1^T} 
\newcommand{\prtemplate}{Po\-dels\-ki-Ry\-bal\-chen\-ko ranking template}


\newcommand{\pdftitle}{Ranking Templates for Linear Loops}
\newcommand{\pdfkeywords}{linear lasso, linear loop, termination, linear ranking template, well-founded relation, ordinal number, multiphase ranking function, piecewise ranking function, lexicographic ranking function, Farkas' Lemma, Motzkin's Theorem}
\newcommand{\pdfsubject}{Termination Analysis}
\newcommand{\pdflang}{en-US}

\hypersetup{
  pdftitle={\pdftitle},
  pdfauthor={Jan Leike, Matthias Heizmann},
  pdfkeywords={\pdfkeywords},
  pdfsubject={\pdfsubject},
  pdflang={\pdflang}
}
\title{\pdftitle
\thanks{The final publication is available at \url{http://link.springer.com}.}
}
\author{Jan Leike\inst{1, 2} \and Matthias Heizmann\inst{1}\thanks{This work is supported by the
  German Research Council (DFG) as part of the Transregional Collaborative
  Research Center ``Automatic Verification and Analysis of Complex Systems''
  (SFB/TR14 AVACS)}
}
\institute{
  University of Freiburg, Germany
  \and Max Planck Institute for Software Systems
}
\date{\today}


\begin{document}

\maketitle

\begin{abstract}
We present a new method for the constraint-based synthesis of termination arguments for linear loop programs based on \emph{linear ranking templates}.
Linear ranking templates are parametrized, well-founded relations such that an assignment to the parameters gives rise to a ranking function.
This approach generalizes existing methods and
enables us to use templates for many different ranking functions with affine-linear components.
We discuss templates for multiphase, piecewise, and lexicographic ranking functions.
Because these ranking templates require both strict and non-strict inequalities,
we use Motzkin's Transposition Theorem instead of Farkas Lemma
to transform the generated $\exists\forall$-constraint into an $\exists$-constraint.
\end{abstract}


\section{Introduction}
\label{sec-introduction}

The scope of this work is the constraint-based synthesis of termination arguments.
In our setting, we consider \emph{linear loop programs}, which are specified by a boolean combination of affine-linear inequalities over the program variables.
This allows for both, deterministic and non-deterministic updates of the program variables.
An example of a linear loop program is given in \autoref{fig-multiphase-introduction}.

\begin{figure}[t]
\begin{center}
\begin{minipage}{35mm}
\begin{lstlisting}
while ($q > 0$):
    $q$ := $q - y$;
    $y$ := $y + 1$;
\end{lstlisting}
\end{minipage}
\begin{minipage}{35mm}
\begin{align*}
q &> 0  \\
\land\; q' &= q - y \\
\land\; y' &= y + 1
\end{align*}
\end{minipage}
\vspace{-3mm}
\end{center}
\caption{
A linear loop program given as program code (left) and
as a formula defining a binary relation (right).
}\label{fig-multiphase-introduction}
\end{figure}

Usually, linear loop programs do not occur as stand-alone programs.
Instead, they are generated as a finite representation of an infinite path in a control flow graph.
For example, in (potentially spurious) counterexamples in termination analysis~\cite{CPR06,HLNR10,KSTTW08,KSTW10,PodelskiR04,PodelskiR05},
non-termination analysis~\cite{GHMRX08},
stability analysis~\cite{CFKP11,PW07},
or cost analysis~\cite{AAGP11,GZ10}.

We introduce the notion of \emph{linear ranking templates}
(\autoref{sec-templates}).
These are parameterized relations specified by linear inequalities such that
any assignment to the parameters yields a well-founded relation.
This notion is general enough to encompass all existing methods for linear loop programs
that use constraint-based synthesis of ranking functions of various kinds
(see \autoref{sec-related-work} for an assessment).
Moreover, ours is the first method for synthesis of lexicographic ranking functions
that does not require a mapping between loop disjuncts and lexicographic components.

In this paper we present the following linear ranking templates.
\begin{itemize}
\item The \emph{multiphase ranking template} specifies a ranking function
that proceeds through a fixed number of phases in the program execution.
Each phase is ranked by an affine-linear function; when this function
becomes non-positive, we move on to the next phase (\autoref{ssec-rt-multiphase}).
\item The \emph{piecewise ranking template} specifies a ranking function
that is a piecewise affine-linear function
with affine-linear predicates to discriminate between the pieces (\autoref{ssec-rt-pw}).
\item The \emph{lexicographic ranking template} specifies a lexicographic ranking function
that corresponds to a tuple of affine-linear functions together with a lexicographic ordering on the tuple (\autoref{ssec-rt-lex}).
\end{itemize}
These linear ranking templates can be used as a 'construction kit'
for composing linear ranking templates that enable more complex ranking functions (\autoref{ssec-composition}).
Moreover, variations on the linear ranking templates presented here can be used and completely different templates could be conceived.

Our method is described in \autoref{sec-rf-synthesis} and can be summarized as follows.
The input is a linear loop program as well as a linear ranking template.
From these we construct a constraint to the parameters of the template.
With Motzkin's Theorem we can transform the constraint into a purely existentially quantified constraint (\autoref{ssec-constraint}).
This $\exists$-constraint is then passed to an SMT solver which checks it for satisfiability.
A positive result implies that the program terminates.
Furthermore, a satisfying assignment will yield a ranking function,
which constitutes a termination argument for the given linear loop program.

Related approaches invoke Farkas' Lemma for the transformation into $\exists$-constraints~\cite{ADFG10,BMS05linrank,BMS05polyrank,CSS03,HHLP13,PR04,Rybalchenko10,SSM04}.
The piecewise and the lexicographic ranking template contain
both strict and non-strict inequalities, yet only non-strict inequalities can be transformed using Farkas' Lemma.
We solve this problem by introducing the use of
\hyperref[thm-motzkin]{Motzkin's Transposition Theorem}, a generalization of Farkas' Lemma.
As a side effect, this also enables both, strict and non-strict inequalities in the program syntax.
To our knowledge, all of the aforementioned methods can be improved by the application of Motzkin's Theorem instead of Farkas' Lemma.

Our method is complete in the following sense.
If there is a ranking function of the form specified by the given linear ranking template,
then our method will discover this ranking function.
In other words, the existence of a solution is never lost in the process of transforming the constraint.

In contrast to some related methods~\cite{HHLP13,PR04}
the constraint we generate is not linear,
but rather a nonlinear algebraic constraint.
Theoretically, this constraint can be decided in exponential time~\cite{GV88}.
Much progress on nonlinear SMT solvers has been made and
present-day implementations routinely solve nonlinear constraints of various sizes~\cite{JM12}.

A related setting to linear loop programs are \emph{linear lasso programs}. These consist of a linear loop program and a program stem,
both of which are specified by boolean combinations of affine-linear inequalities over the program variables.
Our method can be extended to linear lasso programs through the addition of affine-linear inductive invariants,
analogously to related approaches~\cite{BMS05linrank,CSS03,HHLP13,SSM04}.


\section{Preliminaries}
\label{sec-preliminaries}

In this paper we use $\mathbb{K}$ to denote a field that is either the rational numbers $\mathbb{Q}$ or
the real numbers $\mathbb{R}$.
We use ordinal numbers according to the definition in \cite{Jech06}.
The first infinite ordinal is denoted by $\omega$; the finite ordinals coincide with the natural numbers, therefore we will use them interchangeably.

\subsection{Motzkin's Transposition Theorem}
\label{ssec-motzkin}

Intuitively, Motzkin's Transposition Theorem~\cite[Corollary 7.1k]{Schrijver99}
states that a given system of linear inequalities has no solution if
and only if a contradiction can be derived via a positive linear
combination of the inequalities.

\begin{theorem}[Motzkin's Transposition Theorem]
\label{thm-motzkin}
For $A \in \mathbb{K}^{m \times n}$, $C \in \mathbb{K}^{\ell \times n}$,
$b \in \mathbb{K}^m$, and $d \in \mathbb{K}^\ell$,
the formulae \eqref{eq-motzkin1} and \eqref{eq-motzkin2} are equivalent.
\begin{align}
&\hspace{14.7mm} \forall x \in \mathbb{K}^n.\;
  \neg (Ax \leq b \;\land\; Cx < d)
\tag{M1}\label{eq-motzkin1} \\[1.5mm]
&\begin{aligned}
\exists \lambda \in \mathbb{K}^m \;\exists \mu \in \mathbb{K}^\ell.\;
  &\lambda \geq 0 \;\land\; \mu \geq 0 \\
  \land\; &\tr{\lambda} A + \tr{\mu} C = 0
  \;\land\;  \tr{\lambda} b + \tr{\mu} d \leq 0 \\
  \land\; &( \tr{\lambda} b < 0 \;\lor\; \mu \neq 0 )
\end{aligned}\tag{M2}\label{eq-motzkin2}
\end{align}
\end{theorem}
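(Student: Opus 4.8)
The plan is to prove the two implications of the equivalence separately, with the easy direction being \eqref{eq-motzkin2} $\Rightarrow$ \eqref{eq-motzkin1} and the substantive direction being \eqref{eq-motzkin1} $\Rightarrow$ \eqref{eq-motzkin2}. For the easy direction, I would argue by contradiction: suppose $\lambda,\mu$ witness \eqref{eq-motzkin2} and some $x$ satisfies $Ax\le b$ and $Cx<d$. Multiplying the inequality $Ax \le b$ on the left by $\tr{\lambda}\ge 0$ and $Cx<d$ by $\tr{\mu}\ge 0$ and adding, we get $(\tr{\lambda}A+\tr{\mu}C)x \le \tr{\lambda}b+\tr{\mu}d$ on the left, where the left-hand side is $0$ by the equality constraint. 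So $0 \le \tr{\lambda}b+\tr{\mu}d \le 0$, forcing $\tr{\lambda}b + \tr{\mu}d = 0$. Now the last disjunct of \eqref{eq-motzkin2} gives a case split: either $\tr{\lambda}b<0$, in which case $\tr{\mu}d > 0$ and yet $\tr{\mu}(Cx) < \tr{\mu}d$ — but a careful accounting of which inequalities are strict shows the combined inequality is actually strict, contradicting $0 = 0$; or $\mu\ne 0$, and since $\mu\ge 0$ and $Cx<d$ componentwise, at least one strict inequality gets positive weight, so again the aggregated inequality $0 = (\tr{\lambda}A+\tr{\mu}C)x < \tr{\lambda}b+\tr{\mu}d = 0$ is strict, a contradiction. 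I would write this carefully so that the strictness bookkeeping is explicit.

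For the hard direction I would reduce to a known form of the theorem, namely the affine version of Motzkin's transposition theorem (or equivalently a version of Farkas' lemma handling strict inequalities) as cited from \cite[Corollary 7.1k]{Schrijver99}. Concretely, \eqref{eq-motzkin1} says the system $Ax\le b \land Cx<d$ is infeasible over $\mathbb K^n$. I would homogenize by introducing a fresh scalar variable $t$: the strict system is infeasible iff the homogeneous-like system $Ax - bt \le 0,\ Cx - dt \le 0,\ -t \le 0$ together with "$t>0$ or ($t = 0$ and $Cx < 0$ for some coordinate)" has no solution — the standard trick is that $Cx<d$ scales, so infeasibility of the original is equivalent to a statement purely about cones, to which the homogeneous Motzkin/Farkas theorem applies directly and yields nonnegative multipliers $\lambda$ (for the rows of $A$), $\mu$ (for the rows of $C$), and $\nu$ (for $-t\le 0$) with $\tr\lambda A + \tr\mu C = 0$, $-\tr\lambda b - \tr\mu d - \nu = 0$ with the combination producing a strict contradiction, which unpacks to exactly $\tr\lambda b + \tr\mu d \le 0$ and $(\tr\lambda b < 0 \lor \mu \ne 0)$.

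The main obstacle is getting the homogenization and the disjunctive last clause of \eqref{eq-motzkin2} exactly right: one has to track precisely when the derived contradiction is "strict" (i.e. reduces to $0 < 0$) versus merely $0 \le 0$, and ensure that the case $\mu = 0$ (where the contradiction must come from $\tr\lambda b < 0$ alone, i.e. from $Ax\le b$ being infeasible for ordinary Farkas reasons, noting $Cx<d$ is then vacuously satisfiable by any $x$) is handled. A clean way to avoid reproving Schrijver's corollary from scratch is simply to cite it: since the excerpt already attributes the statement to \cite[Corollary 7.1k]{Schrijver99}, the honest "proof" here is to observe that \eqref{eq-motzkin1}$\Leftrightarrow$\eqref{eq-motzkin2} is a restatement of that corollary (after matching notation: their "$\le$" rows are our $A,b$; their "$<$" rows are our $C,d$), and to supply the short self-contained verification of the straightforward direction above so the reader can see the mechanism. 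If a fully self-contained argument is wanted, the fallback is the homogenization reduction to the LP-duality/Farkas form, which is routine but notation-heavy.
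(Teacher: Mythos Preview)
The paper does not give its own proof of this theorem at all: \autoref{thm-motzkin} is stated in the preliminaries with a citation to \cite[Corollary 7.1k]{Schrijver99} and is used as a black box thereafter. So there is no proof in the paper to compare your proposal against.

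Your proposal is therefore more than the paper provides. The easy direction \eqref{eq-motzkin2}$\Rightarrow$\eqref{eq-motzkin1} is essentially correct, though your case split is slightly tangled; the clean version is: if $\mu \neq 0$ then some $\mu_i > 0$ meets a strict row $(Cx)_i < d_i$, so $0 = (\tr{\lambda}A + \tr{\mu}C)x < \tr{\lambda}b + \tr{\mu}d \le 0$; if $\mu = 0$ then $\tr{\lambda}A = 0$ and $\tr{\lambda}b < 0$, so $0 = \tr{\lambda}Ax \le \tr{\lambda}b < 0$. For the hard direction, your suggestion to simply invoke Schrijver's corollary is exactly what the paper does. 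One small remark: your parenthetical that when $\mu = 0$ the strict system ``$Cx < d$ is vacuously satisfiable by any $x$'' is not right (it is satisfiable by \emph{some} $x$ whenever $\ell \ge 1$, but not by any $x$); this does not affect the argument, since in that case the infeasibility certificate comes entirely from Farkas' Lemma applied to $Ax \le b$ alone.
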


If $\ell$ is set to $1$ in \autoref{thm-motzkin}, we obtain the affine version of Farkas' Lemma~\cite[Corollary 7.1h]{Schrijver99}.
Therefore Motzkin's Theorem is strictly superior to Farkas' Lemma, as it allows for a combination of both strict and non-strict inequalities.
Moreover, it is logically optimal in the sense that it enables transformation of \emph{any} universally quantified formula from the theory of linear arithmetic.

\subsection{Linear Loop Programs}
\label{ssec-loops}

In this work, we consider programs that consist of a single loop.
We use binary relations over the program's states to define its transition relation.

We denote by $x$ the vector of $n$ variables $(x_1, \ldots, x_n)^T \in \mathbb{K}^n$ corresponding to program states and
by $x' = (x_1', \ldots, x_n')^T \in \mathbb{K}^n$ the variables of the next state.

\begin{definition}[Linear loop program]\label{def-linear-loops}
A \emph{linear loop program} $\loopt(x, x')$ is a binary relation defined by a formula with the free variables $x$ and $x'$ of the form
\[
\bigvee_{i \in I} \big( A_i\abovebelow{x}{x'} \leq b_i \;\land\; C_i\abovebelow{x}{x'} < d_i \big)
\]
for some finite index set $I$, some matrices $A_i \in \mathbb{K}^{n \times m_i}$,
$C_i \in \mathbb{K}^{n \times k_i}$,
and some vectors $b_i \in \mathbb{K}^{m_i}$ and $d_i \in \mathbb{K}^{k_i}$.
The linear loop program $\loopt(x, x')$ is called \emph{conjunctive}, iff there is only one disjunct, i.e., $\#I = 1$.
\end{definition}
Geometrically the relation $\loopt$ corresponds to a union of convex polyhedra.

\begin{definition}[Termination]\label{def-termination}
We say that a linear loop program $\loopt(x, x')$ \emph{terminates} iff
the relation $\loopt(x, x')$ is well-founded.
\end{definition}

\begin{example}\label{ex-running1}
Consider the following program code.
\begin{center}
\begin{minipage}{45mm}
\begin{lstlisting}
while ($q > 0$):
    if ($y > 0$):
        $q$ := $q - y - 1$;
    else:
        $q$ := $q + y - 1$;
\end{lstlisting}
\end{minipage}
\end{center}
We represent this as the following linear loop program:
\begin{align*}
&(q > 0 \;\land\; y > 0
  \;\land\; y' = y \;\land\; q' = q - y - 1) \\
\lor\; &(q > 0 \;\land\; y \leq 0
  \;\land\; y' = y \;\land\; q' = q + y - 1)
\end{align*}
This linear loop program is not conjunctive.
Furthermore, there is no infinite sequence of states $x_0, x_1, \ldots$ such that for all $i \geq 0$, the two successive states $(x_i, x_{i+1})$ are contained in the relation $\loopt$.
Hence the relation $\loopt(x, x')$ is well-founded and the linear loop program terminates.
\end{example}


\section{Ranking Templates}
\label{sec-templates}

A \emph{ranking template} is a template for a well-founded relation.
More specifically, it is a parametrized formula defining a relation that is well-founded for all assignments to the parameters.
If we show that a given program's transition relation $\loopt$ is a subset of an instance of this well-founded relation, it must be well-founded itself and thus we have a proof for the program's termination.
Moreover, an assignment to the parameters of the template gives rise to a ranking function.
In this work, we consider ranking templates that can be encoded in linear arithmetic.

We call a formula whose free variables contain $x$ and $x'$ a \emph{relation template}.
Each free variable other than $x$ and $x'$ in a relation template is called \emph{parameter}.
Given an assignment $\nu$ to all parameter variables of a relation template $\T(x, x')$,
the evaluation $\nu(\T)$ is called an \emph{instantiation of the relation template $\T$}.
We note that each instantiation of a relation template $\T(x, x')$ defines a binary relation.

When specifying templates, we use parameter variables to define affine-linear functions.
For notational convenience, we will write $f(x)$ instead of the term $\tr{s_f} x + t_f$,
where $s_f \in \mathbb{K}^n$ and $t_f \in \mathbb{K}$ are parameter variables.
We call $f$ an \emph{affine-linear function symbol}.

\begin{definition}[Linear ranking template]
\label{def-linear-rt}
Let $\T(x, x')$ be a template with parameters $D$ and affine-linear function symbols $F$ that can be written as a boolean combination of atoms of the form
\begin{align*}
\sum_{f \in F} \big( \alpha_f \cdot f(x) + \beta_f \cdot f(x') \big)
+ \sum_{d \in D} \gamma_d \cdot d \;\rhd\; 0,
\end{align*}
where $\alpha_f, \beta_f, \gamma_d \in \mathbb{K}$ are constants and
$\rhd \in \{ \geq, > \}$.
We call $\T$ a \emph{linear ranking template} over $D$ and $F$ iff
every instantiation of $\T$ defines a well-founded relation.
\end{definition}

\begin{example}\label{ex-rt-pr}
We call the following template with parameters $D = \{ \delta \}$ and affine-linear function symbols $F = \{ f \}$ the
\emph{\prtemplate}~\cite{PR04}.
\begin{align}
\delta > 0 \;\land\; f(x) > 0 \;\land\; f(x') < f(x) - \delta
\label{eq-rt-pr}
\end{align}
In the remainder of this section, we introduce a formalism that allows us to show that every instantiation of the {\prtemplate} defines a well-founded relation.
Let us now check the additional syntactic requirements for \ref{eq-rt-pr} to be a linear ranking template:
\begin{align*}
\delta > 0 \;&\equiv\;
  \big( 0 \cdot f(x) + 0 \cdot f(x') \big)
  + 1 \cdot \delta > 0 \\
f(x) > 0 \;&\equiv\;
  \big( 1 \cdot f(x) + 0 \cdot f(x') \big)
  + 0 \cdot \delta > 0 \\
f(x') < f(x) - \delta \;&\equiv\;
  \big( 1 \cdot f(x) + (- 1) \cdot f(x') \big)
  + (-1) \cdot \delta > 0
\end{align*}
\end{example}

The next lemma states that we can prove termination of a given linear loop program
by checking that this program's transition relation is included in an instantiation of a linear ranking template.

\begin{lemma}\label{lem-termination}
Let $\loopt$ be a linear loop program and
let $\T$ be a linear ranking template with parameters $D$ and affine-linear function symbols $F$.
If there is an assignment $\nu$ to $D$ and $F$ such that the formula
\begin{align}
\forall x, x'.\;
\big( \loopt(x, x') \rightarrow \nu(\T)(x, x') \big)
\label{eq-rt}
\end{align}
is valid, then the program $\loopt$ terminates.
\end{lemma}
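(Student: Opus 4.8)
The plan is to argue by contradiction using the definition of well-foundedness together with the defining property of a linear ranking template, namely that every instantiation defines a well-founded relation. First I would fix an assignment $\nu$ to the parameters $D$ and affine-linear function symbols $F$ for which the formula \eqref{eq-rt} is valid, so that $\loopt(x,x') \rightarrow \nu(\T)(x,x')$ holds for all $x, x'$; in relational terms this says $\loopt \subseteq \nu(\T)$ as binary relations over $\mathbb{K}^n$. Since $\T$ is a linear ranking template, the instantiation $\nu(\T)$ defines a well-founded relation by \autoref{def-linear-rt}.

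Next I would invoke the elementary fact that a subset of a well-founded relation is itself well-founded: if $R \subseteq S$ and $S$ is well-founded, then any infinite sequence $x_0, x_1, x_2, \ldots$ with $(x_i, x_{i+1}) \in R$ for all $i$ would also satisfy $(x_i, x_{i+1}) \in S$ for all $i$, contradicting well-foundedness of $S$. Applying this with $R = \loopt$ and $S = \nu(\T)$ shows that $\loopt$ is well-founded. By \autoref{def-termination}, well-foundedness of the relation $\loopt(x,x')$ is exactly what it means for the linear loop program $\loopt$ to terminate, which completes the argument.

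I do not expect any genuine obstacle here: the lemma is essentially a bookkeeping statement that unwinds the definitions of "linear ranking template," "instantiation," "well-founded," and "terminates," and the only mathematical content is the trivial monotonicity of well-foundedness under taking subrelations. The one point that deserves a sentence of care is the translation between the logical formulation in \eqref{eq-rt} (a universally quantified implication over the free variables $x, x'$) and the set-theoretic statement $\loopt \subseteq \nu(\T)$, since both $\loopt$ and $\nu(\T)$ are, by \autoref{def-linear-loops} and the discussion preceding \autoref{def-linear-rt}, binary relations defined by formulas with free variables $x$ and $x'$. After that observation the proof is immediate.
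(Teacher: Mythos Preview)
Your proposal is correct and matches the paper's own proof essentially verbatim: the paper observes that $\nu(\T)$ is well-founded by \autoref{def-linear-rt}, that \eqref{eq-rt} is equivalent to $\loopt \subseteq \nu(\T)$, and hence $\loopt$ is well-founded. Your additional sentence spelling out why a subrelation of a well-founded relation is well-founded is more detail than the paper gives, but the approach is identical.
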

\begin{proof}
By definition, $\nu(\T)$ is a well-founded relation and
\eqref{eq-rt} is valid iff
the relation $\loopt$ is a subset of $\nu(\T)$.
Thus $\loopt$ must be well-founded.
\qed
\end{proof}

In order to establish that a formula conforming to the syntactic
requirements is indeed a ranking template, we must show the well-foundedness of its instantiations.
According to the following lemma, we can do this by showing that an assignment to $D$ and $F$ gives rise to a ranking function.
A similar argument was given in \cite{Ben-Amram09};
we provide significantly shortened proof by use of the Recursion Theorem, along the lines of \cite[Example 6.12]{Jech06}.

\begin{definition}[Ranking Function]\label{def-ranking}
Given a binary relation $R$ over a set $\Sigma$, we call a function $\rho$ from $\Sigma$ to some ordinal $\alpha$ a \emph{ranking function} for $R$ iff for all $x,x'\in \Sigma$ the following implication holds.
\[
(x, x') \in R \;\Longrightarrow\; \rho(x) > \rho(x')
\]
\end{definition}

\begin{lemma}\label{lem-ranking}
A binary relation $R$ is well-founded if and only if there exists a ranking function for $R$.
\end{lemma}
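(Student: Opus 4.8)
The plan is to prove both directions of the equivalence. For the easy direction, suppose $\rho\colon\Sigma\to\alpha$ is a ranking function for $R$. If $R$ were not well-founded there would be an infinite sequence $x_0, x_1, x_2, \ldots$ with $(x_i, x_{i+1})\in R$ for all $i$, hence $\rho(x_0) > \rho(x_1) > \rho(x_2) > \cdots$, an infinite strictly descending sequence of ordinals. This contradicts the well-foundedness of the ordinals (equivalently, every nonempty set of ordinals has a least element), so $R$ must be well-founded.

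For the converse, suppose $R$ is a well-founded relation on $\Sigma$. I want to build a ranking function into the ordinals by recursion along $R$. The idea is to set
\[
\rho(x) \;=\; \sup\{\,\rho(y) + 1 \;:\; (x, y)\in R\,\},
\]
where the supremum of the empty set is $0$; intuitively $\rho(x)$ is the ``height'' of $x$ in the well-founded relation. To make this definition legitimate I would invoke the Recursion Theorem for well-founded relations (as in \cite[Example 6.12]{Jech06}), which is exactly where well-foundedness of $R$ is used: it guarantees that the recursion has a unique solution $\rho$ defined on all of $\Sigma$. One then has to check that $\rho$ actually lands in the ordinals — each $\rho(x)$ is a supremum of a set of successor ordinals and hence an ordinal — and that the range, being a set of ordinals, is bounded by some ordinal $\alpha$ (take $\alpha$ to be the supremum of the range, or its successor), so that $\rho\colon\Sigma\to\alpha$ as required by \autoref{def-ranking}. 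Finally, the defining equation immediately gives the ranking property: if $(x, x')\in R$ then $\rho(x') + 1 \le \rho(x)$, so $\rho(x) > \rho(x')$.

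The main obstacle is the application of the Recursion Theorem: one must phrase the recursion so that the value at $x$ depends only on the values at $R$-successors of $x$, and argue that the class of $R$-predecessors (or the relevant restriction of $R$) is set-sized so the recursion theorem applies — here $\Sigma$ being a set makes this routine. A subtlety worth a remark is that $R$ need not be transitive or irreflexive; well-foundedness as used here (no infinite descending chain, equivalently every nonempty subset has an $R$-minimal element) is exactly what the Recursion Theorem needs, and irreflexivity follows a posteriori since $(x,x)\in R$ would force $\rho(x) > \rho(x)$. Everything else — that a decreasing map into the ordinals forbids infinite chains, that suprema of sets of ordinals exist — is standard ordinal arithmetic and needs no detailed calculation.
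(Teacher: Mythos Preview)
Your proof is correct and follows essentially the same approach as the paper: the forward direction argues that a ranking function turns an infinite $R$-chain into an infinite descending ordinal sequence, and the converse defines $\rho(x) = \sup\{\rho(x')+1 \mid (x,x')\in R\}$ via the Recursion Theorem for well-founded relations, exactly as in the paper (which cites \cite[Theorem~6.11]{Jech06}). Your write-up is in fact more careful than the paper's on two points---you verify that $\rho$ lands in some ordinal $\alpha$, and you correctly identify well-foundedness (rather than mere acyclicity) as the hypothesis needed for the recursion---but the underlying argument is the same.
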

\begin{proof}
Let $\rho$ be a ranking function for $R$.
The image of a sequence decreasing with respect to $R$ under $\rho$ is a
strictly decreasing ordinal sequence.
Because the ordinals are well-ordered, this sequence cannot be infinite.

Conversely, the graph $G = (\Sigma, R)$ with vertices $\Sigma$ and edges $R$ is acyclic by assumption.
Hence the function $\rho$ that assigns to every element of $\Sigma$ an ordinal number such that
$\rho(x) = \sup\, \{ \rho(x') + 1 \mid (x, x') \in R \}$
is well-defined and exists due to the Recursion Theorem~\cite[Theorem 6.11]{Jech06}.
\qed
\end{proof}

\begin{example}\label{ex-running2}
Consider the terminating linear loop program $\loopt$ from \autoref{ex-running1}.
A ranking function for $\loopt$ is $\rho: \mathbb{R}^2 \to \omega$, defined as follows.
\begin{align*}
\rho(q, y) =
\begin{cases}
\lceil q \rceil, &\text{if } q > 0,
\text{ and} \\
0 & \text{otherwise.}
\end{cases}
\end{align*}
Here $\lceil \cdot \rceil$ denotes the ceiling function that assigns to
every real number $r$ the smallest natural number that is larger or equal to $r$.
Since we consider the natural numbers to be a subset of the ordinals, the ranking function $\rho$ is well-defined.
\end{example}

We use assignments to a template's {parameters and affine-linear function symbols} to construct a ranking function.
These functions are real-valued and we will transform them into functions with codomain $\omega$ in the following way.

\begin{definition}\label{def-ordinal-ranking}
Given an affine-linear function $f$ and a real number $\delta > 0$ called
the \emph{step size}, we define the \emph{ordinal ranking
equivalent} of $f$ as
\begin{align*}
\widehat{f}(x) =
\begin{cases}
\left\lceil \frac{f(x)}{\delta} \right\rceil, &\text{if } f(x) > 0,
\text{ and} \\
0 & \text{otherwise.}
\end{cases}
\end{align*}
\end{definition}

For better readability we used this notation which does not explicitly refer to $\delta$.
In our presentation the step size $\delta$ is always clear from the context in which an ordinal ranking equivalent $\widehat{f}$ is used. 

\begin{example}\label{ex-running3}
Consider the linear loop program $\loopt(x, x')$ from \autoref{ex-running1}.
For $\delta = \frac{1}{2}$ and $f(q) = q + 1$,
the ordinal ranking equivalent of $f$ with step size $\delta$ is
\begin{align*}
\widehat{f}(q, y) =
\begin{cases}
\lceil 2(q + 1) \rceil, & \text{if } q + 1 > 0, \text{ and} \\
0 & \text{otherwise.}
\end{cases}
\end{align*}
\end{example}

The assignment from \autoref{ex-running3} to $\delta$ and $f$ makes the implication \eqref{eq-rt} valid.
In order to invoke \autoref{lem-termination} to show that the linear loop program given in \autoref{ex-running1} terminates,
we need to prove that the {\prtemplate} is a linear ranking template.
We use the following technical lemma.

\begin{lemma}\label{lem-ordinal-ranking}
Let $f$ be an affine-linear function of step size $\delta > 0$ and
let $x$ and $x'$ be two states.
If $f(x) > 0$ and $f(x) - f(x') > \delta$,
then $\widehat{f}(x) > 0$ and $\widehat{f}(x) > \widehat{f}(x')$.
\end{lemma}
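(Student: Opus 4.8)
The plan is to unpack the definition of the ordinal ranking equivalent and verify the two claimed inequalities by elementary reasoning about the ceiling function, splitting on the sign of $f(x')$.

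First I would establish $\widehat{f}(x) > 0$. Since $f(x) > 0$ by hypothesis, we are in the first case of \autoref{def-ordinal-ranking}, so $\widehat{f}(x) = \lceil f(x)/\delta \rceil$. Because $\delta > 0$ we have $f(x)/\delta > 0$, and the ceiling of a strictly positive real is at least $1$, hence $\widehat{f}(x) \geq 1 > 0$.

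Next I would show $\widehat{f}(x) > \widehat{f}(x')$ by considering two cases. If $f(x') \leq 0$, then $\widehat{f}(x') = 0$ by the second case of the definition, and we are already done since $\widehat{f}(x) > 0$. If $f(x') > 0$, then $\widehat{f}(x') = \lceil f(x')/\delta \rceil$, and I must compare $\lceil f(x)/\delta \rceil$ with $\lceil f(x')/\delta \rceil$. From $f(x) - f(x') > \delta$ we get $f(x)/\delta - f(x')/\delta > 1$, i.e.\ $f(x)/\delta > f(x')/\delta + 1$. The key arithmetic fact is that $\lceil a \rceil > \lceil b \rceil$ whenever $a > b + 1$: indeed $\lceil a \rceil \geq a > b + 1 > \lceil b \rceil$, using $\lceil b \rceil < b + 1$. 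Applying this with $a = f(x)/\delta$ and $b = f(x')/\delta$ yields $\widehat{f}(x) > \widehat{f}(x')$.

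I do not anticipate a genuine obstacle here; the statement is a routine technical lemma. The only point requiring a small amount of care is the case split on the sign of $f(x')$ — one must not forget that the ordinal ranking equivalent is defined piecewise, so that $\widehat{f}(x')$ may be $0$ rather than $\lceil f(x')/\delta \rceil$ — and the elementary inequality $\lceil b \rceil < b + 1$, which drives the strict comparison of ceilings.
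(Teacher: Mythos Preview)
Your proposal is correct and follows essentially the same approach as the paper: establish $\widehat{f}(x) > 0$ from $f(x) > 0$, then case-split on whether $\widehat{f}(x')$ is zero, and in the nontrivial case divide by $\delta$ to get a gap exceeding $1$ and compare ceilings. You spell out the ceiling inequality $\lceil a \rceil \geq a > b+1 > \lceil b \rceil$ more explicitly than the paper does, but the argument is the same.
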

\begin{proof}
From $f(x) > 0$ follows that $\widehat{f}(x) > 0$.
Therefore $\widehat{f}(x) > \widehat{f}(x')$ in the case $\widehat{f}(x') = 0$.
For $\widehat{f}(x') > 0$, we use the fact that $f(x) - f(x') > \delta$
to conclude that $\frac{f(x)}{\delta} - \frac{f(x')}{\delta} > 1$ and
hence $\widehat{f}(x') > \widehat{f}(x)$.
\qed
\end{proof}

An immediate consequence of this lemma is that
the {\prtemplate} is a linear ranking template:
any assignment $\nu$ to $\delta$ and $f$ satisfies the requirements of \autoref{lem-ordinal-ranking}.
Consequently, $\widehat{f}$ is a ranking function for $\nu(\T)$,
and by \autoref{lem-ranking} this implies that $\nu(\T)$ is well-founded.


\section{Examples of Ranking Templates}
\label{sec-ranking-templates-examples}

\subsection{Multiphase Ranking Template}
\label{ssec-rt-multiphase}

The multiphase ranking template is targeted at
programs that go through a finite number of phases in their execution.
Each phase is ranked with an affine-linear function and
the phase is considered to be completed once this function becomes non-positive.

\begin{example}\label{ex-2phase-1}
Consider the linear loop program from \autoref{fig-multiphase-introduction}.
Every execution can be partitioned into two phases:
first $y$ increases until it is positive and
then $q$ decreases until the loop condition $q > 0$ is violated.
Depending on the initial values of $y$ and $q$,
either phase might be skipped altogether.
\end{example}

\begin{definition}[Multiphase Ranking Template]\label{def-rt-multiphase}
We define the \emph{$k$-phase ranking template} with
parameters $D = \{ \delta_1, \ldots, \delta_k \}$ and
affine-linear function symbols $F = \{ f_1, \ldots, f_k \}$ as follows.
\begin{align}
&\bigwedge_{i=1}^k \delta_i > 0 \label{eq-rt-multiphase1} \\
\land\; &\bigvee_{i=1}^k f_i(x) > 0 \label{eq-rt-multiphase2} \\
\land\; &f_1(x') < f_1(x) - \delta_1 \label{eq-rt-multiphase3} \\
\land\; &\bigwedge_{i=2}^k \Big( f_i(x') < f_i(x) - \delta_i
  \;\lor\; f_{i-1}(x) > 0 \Big) \label{eq-rt-multiphase4}
\end{align}
\end{definition}

We say that the multiphase ranking function given by
an assignment to $f_1, \ldots, f_k$ and $\delta_1, \ldots, \delta_k$
\emph{is in phase $i$},
iff $f_i(x) > 0$ and $f_j(x) \leq 0$ for all $j < i$.
The condition \eqref{eq-rt-multiphase2} states that there is always some $i$ such that
the multiphase ranking function is in phase $i$.
\eqref{eq-rt-multiphase3} and \eqref{eq-rt-multiphase4} state that if we are in a phase $\geq i$,
then $f_i$ has to be decreasing by at least $\delta_i > 0$.
Note that the $1$-phase ranking template coincides with the {\prtemplate}.

Multiphase ranking functions are related to \emph{eventually negative expressions} introduced by Bradley, Manna, and Sipma~\cite{BMS05polyrank}.
However, in contrast to our approach,
they require a template tree that specifies in detail
how each loop transition interacts with each phase.

\begin{lemma}\label{lem-rt-multiphase}
The $k$-phase ranking template is a linear ranking template.
\end{lemma}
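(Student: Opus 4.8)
The plan is to exhibit an explicit ranking function for an arbitrary instantiation of the $k$-phase ranking template and then invoke \autoref{lem-ranking}. Fix an assignment $\nu$ to the parameters $D = \{\delta_1,\ldots,\delta_k\}$ and the affine-linear function symbols $F = \{f_1,\ldots,f_k\}$; by \eqref{eq-rt-multiphase1} we have $\delta_i > 0$ for all $i$, so the ordinal ranking equivalents $\widehat{f_i}$ (\autoref{def-ordinal-ranking}, each with its own step size $\delta_i$) are well-defined functions into $\omega$. The idea is that the pair $(\,\widehat{f_k}(x), \widehat{f_{k-1}}(x), \ldots, \widehat{f_1}(x)\,)$, ordered lexicographically with $\widehat{f_k}$ most significant, should decrease along every transition; encoding a lexicographic tuple of $k$ natural-number components as a single ordinal below $\omega^k$ via $\rho(x) = \sum_{i=1}^{k} \widehat{f_i}(x)\cdot \omega^{i-1}$ then yields an ordinal-valued ranking function, and \autoref{lem-ranking} gives well-foundedness.

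The core work is checking that $\rho$ strictly decreases on every pair $(x,x') \in \nu(\T)$. First I would use \eqref{eq-rt-multiphase2} to fix the smallest index $i$ with $f_i(x) > 0$, so the instantiation is ``in phase $i$'': $f_i(x) > 0$ and $f_j(x) \le 0$, hence $\widehat{f_j}(x) = 0$, for all $j < i$. For the most significant components, I argue that $\widehat{f_m}(x') \le \widehat{f_m}(x)$ for every $m > i$: from \eqref{eq-rt-multiphase3}/\eqref{eq-rt-multiphase4} applied at index $m$, either $f_m(x') < f_m(x) - \delta_m$ — which forces $\widehat{f_m}(x') \le \widehat{f_m}(x)$ by a computation like the one in \autoref{lem-ordinal-ranking} (and in fact $\widehat{f_m}$ does not increase even when $f_m(x)\le 0$, since then $\widehat{f_m}(x')$ could only be positive if $f_m(x')>0$, impossible when $f_m(x') < f_m(x)-\delta_m \le -\delta_m$) — or $f_{m-1}(x) > 0$; but $m-1 \ge i$, and if $m - 1 > i$ this contradicts... no: I instead note that by minimality of $i$, $f_{m-1}(x)>0$ can only happen when $m-1 \ge i$, which is consistent, so in the disjunct $f_{m-1}(x)>0$ I do not get a bound from index $m$ directly. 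The clean way around this: show by downward induction on the phase structure that for the particular phase index $i$, the clause \eqref{eq-rt-multiphase3}–\eqref{eq-rt-multiphase4} at index $i$ itself has its right disjunct $f_{i-1}(x)>0$ \emph{false} (either $i=1$, using \eqref{eq-rt-multiphase3}, or $i>1$ and $f_{i-1}(x)\le 0$ by minimality), hence $f_i(x') < f_i(x) - \delta_i$ holds; combined with $f_i(x) > 0$, \autoref{lem-ordinal-ranking} gives $\widehat{f_i}(x) > \widehat{f_i}(x')$. For each $m > i$, the clause at index $m$ gives $f_m(x') < f_m(x) - \delta_m$ \emph{or} $f_{m-1}(x) > 0$; I handle both by observing that either way $\widehat{f_m}$ cannot increase — in the first disjunct by the \autoref{lem-ordinal-ranking}-style estimate, and in the second disjunct I pass to index $m-1$ and repeat, so eventually I either hit a strict-decrease disjunct or reach index $i$ where I have already shown strict decrease; in all cases $\widehat{f_m}(x') \le \widehat{f_m}(x)$ for $m > i$ with strict decrease at the topmost index where the chain bottoms out. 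Writing $\rho(x) = \sum_{m=1}^k \widehat{f_m}(x)\,\omega^{m-1}$, the lexicographic comparison (most significant index first) then yields $\rho(x) > \rho(x')$: at the largest index where the two tuples differ the first has the strictly larger entry, and lower-index entries are irrelevant in the ordinal sum since $\widehat{f_j}(x') < \omega$ contributes less than $\omega^{m-1}$ for $j < m$.

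I expect the main obstacle to be exactly this bookkeeping around the disjunctive clauses \eqref{eq-rt-multiphase4}: one must argue that the ``$f_{i-1}(x)>0$'' escape hatch can never be invoked at the active phase index, and that following the chain of escape hatches upward from any higher index $m$ terminates at an index that is genuinely decreasing. Once that combinatorial structure is pinned down, the arithmetic is just repeated application of the estimate underlying \autoref{lem-ordinal-ranking}, and the passage from the lexicographic tuple to an ordinal in $\omega^k$ followed by \autoref{lem-ranking} is routine. A secondary point to state carefully is that each $\widehat{f_i}$ uses its own step size $\delta_i$, matching the per-phase decrease amount in \eqref{eq-rt-multiphase3}–\eqref{eq-rt-multiphase4}, so that \autoref{lem-ordinal-ranking} applies verbatim at each index.
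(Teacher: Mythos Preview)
Your ranking function does not work because you put $\widehat{f_k}$ in the most significant position, but the template imposes no constraint on $f_m(x')$ for $m$ strictly above the current phase $i$ whenever the escape hatch $f_{m-1}(x)>0$ fires---and for $m-1\ge i$ it may well fire. Concretely, take $k=2$, $\delta_1=\delta_2=1$, and a pair $(x,x')$ with $f_1(x)=2$, $f_2(x)=1$, $f_1(x')=\tfrac12$, $f_2(x')=100$: this pair satisfies \eqref{eq-rt-multiphase1}--\eqref{eq-rt-multiphase4} (the clause at index $2$ is discharged by $f_1(x)>0$), yet $\widehat{f_2}$ jumps from $1$ to $100$, so your $\rho$ increases. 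Your ``chain of escape hatches'' argument cannot rescue this: passing from the clause at index $m$ to the clause at index $m-1$ yields information about $f_{m-1}$, not about $f_m$, so it never establishes $\widehat{f_m}(x')\le\widehat{f_m}(x)$.

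The paper avoids the issue with a much smaller codomain, taking $\rho(x)=\omega\cdot(k-i)+\widehat{f_i}(x)$ where $i$ is the current phase; the key observation is that for every $j<i$ the escape hatch $f_{j-1}(x)>0$ is blocked (since $j-1<i$), hence $f_j(x')<f_j(x)-\delta_j\le -\delta_j<0$, so the phase index can only grow along a transition. Your lexicographic idea can be salvaged by \emph{reversing} the order and making $\widehat{f_1}$ most significant, i.e.\ $\rho(x)=\sum_{j=1}^k \omega^{k-j}\cdot\widehat{f_j}(x)$: then for $j<i$ both $\widehat{f_j}(x)$ and $\widehat{f_j}(x')$ vanish by the argument just given, at $j=i$ there is a strict drop by \autoref{lem-ordinal-ranking}, and the components with $j>i$ are dominated. (Also note that in ordinal arithmetic the natural-number coefficient must sit on the right, $\omega^{j-1}\cdot n$, not $n\cdot\omega^{j-1}$, since $n\cdot\omega=\omega$ for $n\ge 1$.)
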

\begin{proof}
The $k$-phase ranking template conforms to the syntactic requirements
to be a linear ranking template.
Let an assignment to the parameters $D$ and the affine-linear function symbols $F$ of the $k$-phase template be given.
Consider the following ranking function with codomain $\omega \cdot k$.
\begin{align}
\rho(x) =
\begin{cases}
\omega \cdot (k - i) + \widehat{f_i}(x) & \text{if }
f_j(x) \leq 0 \text{ for all } j < i \text{ and } f_i(x) > 0, \\
0 & \text{otherwise.}
\end{cases}
\label{eq-rt-multiphase-rf}
\end{align}
Let $(x, x') \in \T$.
By \autoref{lem-ranking}, we need to show that $\rho(x') < \rho(x)$.
From \eqref{eq-rt-multiphase2} follows that $\rho(x) > 0$.
Moreover, there is an $i$ such that $f_i(x) > 0$ and $f_j(x) \leq 0$ for all $j < i$.
By \eqref{eq-rt-multiphase3} and \eqref{eq-rt-multiphase4},
$f_j(x') \leq 0$ for all $j < i$ because
$f_j(x') < f_j(x) - \delta_j \leq 0 - \delta_j \leq 0$, since
$f_\ell(x) \leq 0$ for all $\ell < j$.

If $f_i(x') \leq 0$, then $\rho(x') \leq \omega \cdot (k - i)
< \omega \cdot (k - i) + \widehat{f_i}(x) = \rho(x)$.
Otherwise, $f_i(x') > 0$ and from \eqref{eq-rt-multiphase4} follows $f_i(x') < f_i(x) - \delta_i$.
By \autoref{lem-ordinal-ranking},
$\widehat{f_i}(x) > \widehat{f_i}(x')$ for the ordinal ranking equivalent of $f_i$ with step size $\delta_i$.
Hence
\[
\rho(x') = \omega \cdot (k - i) + \widehat{f_i}(x')
  < \omega \cdot (k - i) + \widehat{f_i}(x)
  = \rho(x). \eqno\qed
\]
\end{proof}

\begin{example}\label{ex-2phase-2}
Consider the program from \autoref{fig-multiphase-introduction}.
The assignment
\begin{center}
$f_1(q, y) = 1 - y, \qquad
f_2(q, y) = q + 1,
\qquad \delta_1 = \delta_2 = \frac{1}{2}$
\end{center}
yields a $2$-phase ranking function for this program.
\end{example}

\begin{example}\label{ex-multiphase-rotation}
There are terminating conjunctive linear loop programs that do not have a
multiphase ranking function:
\begin{align*}
q > 0 \;\land\; q' = q + z - 1 \;\land\; z' = -z
\end{align*}
Here, the sign of $z$ is alternated in each iteration. The function $\rho(q, y, z) = \lceil q \rceil$ is decreasing in every second iteration, but not decreasing in each iteration.
\end{example}

\begin{example}\label{ex-multiphase-complexity}
Consider the following linear loop program.
\begin{align*}
  &(q > 0 \;\land\; y > 0 \;\land\; y' = 0) \\
\lor\; &(q > 0 \;\land\; y \leq 0 \;\land\; y' = y - 1 \;\land\; q' = q - 1)
\end{align*}
For a given input, we cannot give an upper bound on the execution time:
starting with $y > 0$,
after the first loop execution, $y$ is set to $0$ and
$q$ is set to \emph{some arbitrary value}, as no restriction to $q'$ applies in the first disjunct.
In particular, this value does not depend on the input.
The remainder of the loop execution then takes $\lceil q \rceil$ iterations to terminate.

However we can prove the program's termination with
the $2$-phase ranking function constructed from
$f_1(q, y) = y$ and $f_2(q, y) = q$.
\end{example}

\subsection{Piecewise Ranking Template}
\label{ssec-rt-pw}

The piecewise ranking template formalizes a ranking function that is defined piecewise using affine-linear
predicates to discriminate the pieces.

\begin{definition}[Piecewise Ranking Template]\label{def-rt-pw}
We define the \emph{$k$-piece ranking template} with
parameters $D = \{ \delta \}$ and
affine-linear function symbols $F = \{ f_1, \ldots, f_k, g_1, \ldots, g_k \}$
as follows.
\begin{align}
&\delta > 0 \label{eq-rt-pw1} \\
\land\; &\bigwedge_{i=1}^k \bigwedge_{j=1}^k \Big( g_i(x) < 0
  \;\lor\; g_j(x') < 0 \;\lor\; f_j(x') < f_i(x) - \delta \Big) \label{eq-rt-pw2} \\
\land\; &\bigwedge_{i=1}^k f_i(x) > 0 \label{eq-rt-pw3} \\
\land\; &\bigvee_{i=1}^k g_i(x) \geq 0 \label{eq-rt-pw4}
\end{align}
\end{definition}
We call the affine-linear function symbols $\{ g_i \mid 1 \leq i \leq k \}$ \emph{discriminators} and
the affine-linear function symbols $\{ f_i \mid 1 \leq i \leq k \}$ \emph{ranking pieces}.

The disjunction \eqref{eq-rt-pw4} states that the discriminators cover all states;
in other words, the piecewise defined ranking function is a total function.
Given the $k$ different pieces $f_1, \ldots, f_k$ and a state $x$,
we use $f_i$ as a ranking function only if $g_i(x) \geq 0$ holds.
This choice need not be unambiguous;
the discriminators may overlap.
If they do, we can use any one of their ranking pieces.
According to \eqref{eq-rt-pw3}, all ranking pieces are positive-valued and
by \eqref{eq-rt-pw2} piece transitions are well-defined:
the rank of the new state is always less than the rank any of the ranking pieces assigned to the old state.

\begin{lemma}\label{lem-rt-pw}
The $k$-piece ranking template is a linear ranking template.
\end{lemma}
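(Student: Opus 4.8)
The plan is to mimic the proof of \autoref{lem-rt-multiphase}: first check that the $k$-piece template satisfies the syntactic requirements of \autoref{def-linear-rt} (each atom is an affine-linear (in)equality in the $f_i$, $g_i$ and $\delta$), and then exhibit an explicit ranking function $\rho$ for an arbitrary instantiation $\nu(\T)$ and invoke \autoref{lem-ranking}. The natural candidate is, for a state $x$, to pick some index $i$ with $g_i(x) \geq 0$ — which exists by \eqref{eq-rt-pw4} — and set $\rho(x) = \widehat{f_i}(x)$, the ordinal ranking equivalent of $f_i$ with step size $\delta$. On the complement (no such $i$, which under the instantiation is vacuous, but we define $\rho$ totally anyway) set $\rho(x) = 0$. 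Since all $f_i$ are positive by \eqref{eq-rt-pw3}, in fact $\rho(x) = \widehat{f_i}(x) > 0$ whenever $g_i(x) \geq 0$. The codomain is just $\omega$, so unlike the multiphase case there is no need for ordinal arithmetic beyond the ceiling construction already packaged in \autoref{def-ordinal-ranking} and \autoref{lem-ordinal-ranking}.

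The core step is to verify the ranking condition: if $(x,x') \in \nu(\T)$ then $\rho(x) > \rho(x')$. Fix the index $i$ chosen for $x$ (so $g_i(x) \geq 0$) and the index $j$ chosen for $x'$ (so $g_j(x') \geq 0$). Instantiate \eqref{eq-rt-pw2} at this particular pair $(i,j)$: one of the three disjuncts $g_i(x) < 0$, $g_j(x') < 0$, $f_j(x') < f_i(x) - \delta$ must hold. The first two are false by the choice of $i$ and $j$, so we must have $f_j(x') < f_i(x) - \delta$, i.e. $f_i(x) - f_j(x') > \delta$. Combined with $f_i(x) > 0$ and $f_j(x') > 0$ (from \eqref{eq-rt-pw3}), \autoref{lem-ordinal-ranking} — applied with the roles "$f$" $:= f_i$ at $x$ and "$f$" $:= f_j$ at $x'$, same step size $\delta$ — is almost what we need, except that lemma is stated for a single function symbol $f$ evaluated at two points. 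Here $\rho(x) = \widehat{f_i}(x)$ and $\rho(x') = \widehat{f_j}(x')$ involve two different pieces, so I would either re-run the two-line ceiling argument inline (from $f_i(x) - f_j(x') > \delta$ deduce $\tfrac{f_i(x)}{\delta} - \tfrac{f_j(x')}{\delta} > 1$, hence $\lceil \tfrac{f_i(x)}{\delta}\rceil > \lceil \tfrac{f_j(x')}{\delta}\rceil$, noting both numerators are positive) or remark that the proof of \autoref{lem-ordinal-ranking} never used that the two function values came from the same symbol. This is the only genuinely delicate point, and it is minor.

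Concretely the write-up will read roughly as follows.

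\begin{proof}
The $k$-piece ranking template conforms to the syntactic requirements of \autoref{def-linear-rt}. Let an assignment $\nu$ to $D$ and $F$ be given and abbreviate $\widehat{f_i}$ for the ordinal ranking equivalent of $f_i$ with step size $\delta$. Define $\rho$ with codomain $\omega$ by choosing, for each state $x$, some index $\iota(x)$ with $g_{\iota(x)}(x) \geq 0$ and setting
\begin{align*}
\rho(x) =
\begin{cases}
\widehat{f_{\iota(x)}}(x) & \text{if there is } i \text{ with } g_i(x) \geq 0, \\
0 & \text{otherwise.}
\end{cases}
\end{align*}
By \autoref{lem-ranking} it suffices to show $\rho(x) > \rho(x')$ for $(x,x') \in \nu(\T)$. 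By \eqref{eq-rt-pw4} the first case of the definition applies to both $x$ and $x'$; write $i = \iota(x)$ and $j = \iota(x')$, so $g_i(x) \geq 0$ and $g_j(x') \geq 0$. Instantiating \eqref{eq-rt-pw2} at the pair $(i,j)$, the disjuncts $g_i(x) < 0$ and $g_j(x') < 0$ are false, hence $f_j(x') < f_i(x) - \delta$, i.e. $f_i(x) - f_j(x') > \delta$. Since $f_i(x) > 0$ and $f_j(x') > 0$ by \eqref{eq-rt-pw3}, we have $\widehat{f_i}(x) = \lceil f_i(x)/\delta \rceil > 0$, and from $f_i(x)/\delta - f_j(x')/\delta > 1$ with both quotients positive it follows that $\lceil f_i(x)/\delta \rceil > \lceil f_j(x')/\delta \rceil$. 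Therefore $\rho(x) = \widehat{f_i}(x) > \widehat{f_j}(x') = \rho(x')$, as required.
\qed
\end{proof}

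The only thing to watch during compilation is not to reuse $i,j$ as bound summation indices and free chosen indices in the same breath; otherwise the argument is a direct adaptation of the multiphase proof with the ordinal bookkeeping stripped away.
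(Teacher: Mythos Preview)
Your proof is correct and follows essentially the same route as the paper's: define a ranking function into $\omega$ via the ordinal ranking equivalents $\widehat{f_i}$, pick indices $i,j$ for $x,x'$ with $g_i(x)\geq 0$ and $g_j(x')\geq 0$, and use \eqref{eq-rt-pw2} together with \eqref{eq-rt-pw3} to obtain $\widehat{f_i}(x) > \widehat{f_j}(x')$. The only cosmetic difference is that the paper sets $\rho(x) = \max\{\widehat{f_i}(x) \mid g_i(x)\geq 0\}$ rather than using a choice function $\iota$, and simply invokes \autoref{lem-ordinal-ranking} where you (more carefully) inline the two-line ceiling argument to handle the two distinct function symbols.
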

\begin{proof}
The $k$-piece ranking template conforms to the syntactic requirements to be a linear ranking template.
Let an assignment to the parameter $\delta$ and the affine-linear function symbols $F$ of the $k$-piece template be given.
Consider the following ranking function with codomain $\omega$.
\begin{align}
\rho(x) = \max \big\{ \widehat{f_i}(x) \mid g_i(x) \geq 0 \big\}
\end{align}
The function $\rho$ is well-defined, because according to \eqref{eq-rt-pw4},
the set $\{ \widehat{f_i}(x) \mid g_i(x) \geq 0 \}$ is not empty.
Let $(x, x') \in \T$ and let $i$ and $j$ be indices such that
$\rho(x) = \widehat{f_i}(x)$ and $\rho(x') = \widehat{f_j}(x')$.
By definition of $\rho$, we have that $g_i(x) \geq 0$ and $g_j(x) \geq 0$, and
\eqref{eq-rt-pw2} thus implies $f_j(x') < f_i(x) - \delta$.
According to \autoref{lem-ordinal-ranking} and \eqref{eq-rt-pw3},
this entails $\widehat{f_j}(x') < \widehat{f_i}(x)$ and therefore $\rho(x') < \rho(x)$.
\autoref{lem-ranking} now implies that $\T$ is well-founded.
\qed
\end{proof}

\begin{example}\label{ex-rt-pw}
Consider the following linear loop program.
\begin{align*}
&(q > 0 \;\land\; p > 0 \;\land\; q < p \;\land\; q' = q - 1) \\
\lor\; &(q > 0 \;\land\; p > 0 \;\land\; p < q \;\land\; p' = p - 1)
\end{align*}
In every loop iteration, the minimum of $p$ and $q$ is decreased by $1$ until it becomes negative.
Thus, this program is ranked by the 2-piece ranking function constructed from $f_1(p, q) = p$ and $f_2(p, q) = q$
with step size $\delta = \frac{1}{2}$ and
discriminators $g_1(p, q) = q - p$ and $g_2(p, q) = p - q$.
Moreover, this program does not have a multiphase or lexicographic ranking function:
both $p$ and $q$ may increase without bound during program execution due to non-determinism and
the number of switches between $p$ and $q$ being the minimum value is also unbounded.
\end{example}

\subsection{Lexicographic Ranking Template}
\label{ssec-rt-lex}

Lexicographic ranking functions consist of lexicographically ordered components of affine-linear functions.
A state is mapped to a tuple of values such that the loop transition leads to a decrease with respect to the lexicographic ordering for this tuple.
Therefore no function may increase unless a function of a lower index decreases.
Additionally, at every step, there must be at least one function that decreases.

Several different definitions for lexicographic ranking functions have been utilized~\cite{ADFG10,BAG13,BMS05linrank};
a comparison can be found in \cite{BAG13}.
Each of these definitions for lexicographic linear ranking functions can be formalized using linear ranking templates;
in this publication we are following the definition of \cite{ADFG10}.

\begin{definition}[Lexicographic Ranking Template]\label{def-rt-lex}
We define the \emph{$k$-lexi\-co\-gra\-phic ranking template}
with parameters $D = \{ \delta_1, \ldots, \delta_k \}$ and
affine-linear function symbols $F = \{ f_1, \ldots, f_k \}$ as follows.
\begin{align}
&\bigwedge_{i=1}^{k} \delta_i > 0 \label{eq-rt-lex1} \\
\land\; &\bigwedge_{i=1}^k f_i(x) > 0 \label{eq-rt-lex2} \\
\land\; &\bigwedge_{i=1}^{k-1} \Big( f_i(x') \leq f_i(x)
  \;\lor\; \bigvee_{j=1}^{i-1} f_j(x') < f_j(x) - \delta_j \Big) \label{eq-rt-lex3} \\
\land\; &\bigvee_{i=1}^k f_i(x') < f_i(x) - \delta_i \label{eq-rt-lex4}
\end{align}
\end{definition}
The conjunction \eqref{eq-rt-lex2} establishes that all lexicographic components $f_1, \ldots, f_k$ have positive values.
In every step, at least one component must decrease according to \eqref{eq-rt-lex4}.
From \eqref{eq-rt-lex3} follows that
all functions corresponding to components of smaller index than the decreasing function may increase.

\begin{lemma}\label{lem-rt-lex}
The $k$-lexicographic ranking template is a linear ranking template.
\end{lemma}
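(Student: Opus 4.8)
The plan is to follow the same two-step pattern used for the multiphase and piecewise templates: first note that the $k$-lexicographic ranking template manifestly satisfies the syntactic requirements of \autoref{def-linear-rt} (each of \eqref{eq-rt-lex1}--\eqref{eq-rt-lex4} is a boolean combination of atoms of the prescribed shape, using the coefficients $\alpha_{f_i},\beta_{f_i},\gamma_{\delta_i}\in\{-1,0,1\}$), and then exhibit, for an arbitrary assignment $\nu$ to $D$ and $F$, an explicit ranking function $\rho$ with codomain an ordinal, so that \autoref{lem-ranking} yields well-foundedness of $\nu(\T)$.

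The natural codomain is $\omega^k$, and the natural candidate is the lexicographically-ordered tuple built from the ordinal ranking equivalents: for a state $x$ at which not all $f_i$ are positive I would simply set $\rho(x)=0$ (this case is irrelevant because \eqref{eq-rt-lex2} and \eqref{eq-rt-lex4} force any $(x,x')\in\nu(\T)$ to have $f_i(x)>0$ for all $i$), and otherwise set
\[
\rho(x) \;=\; \omega^{k-1}\cdot\widehat{f_1}(x) \;+\; \omega^{k-2}\cdot\widehat{f_2}(x) \;+\;\cdots\;+\; \widehat{f_k}(x),
\]
where $\widehat{f_i}$ is the ordinal ranking equivalent of $f_i$ with step size $\delta_i$. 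Given $(x,x')\in\nu(\T)$, I would let $m$ be the least index with $f_m(x') < f_m(x) - \delta_m$ (which exists by \eqref{eq-rt-lex4}). For each $i<m$, minimality of $m$ together with \eqref{eq-rt-lex3} forces $f_i(x')\le f_i(x)$ — indeed the second disjunct of \eqref{eq-rt-lex3} for index $i$ would require some $j<i<m$ with $f_j(x')<f_j(x)-\delta_j$, contradicting minimality — hence $\widehat{f_i}(x')\le\widehat{f_i}(x)$ by monotonicity of the ceiling. At index $m$, from $f_m(x)>0$ (by \eqref{eq-rt-lex2}) and $f_m(x)-f_m(x')>\delta_m$, \autoref{lem-ordinal-ranking} gives $\widehat{f_m}(x')<\widehat{f_m}(x)$, so $\widehat{f_m}(x)\ge\widehat{f_m}(x')+1$. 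Comparing the two tuples in $\omega^k$: they agree or decrease in the first $m-1$ coordinates and strictly decrease in coordinate $m$, so $\rho(x')<\rho(x)$ in the lexicographic order on $\omega^k$. (Concretely, the $\omega^{k-m}$-weighted term already dominates every contribution from coordinates $>m$, since those are finite ordinals, using $\omega^{k-m-1}\cdot n < \omega^{k-m}$ for all $n<\omega$.)

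The only mildly delicate points are bookkeeping rather than conceptual. One must be careful that the "decrease or stay equal in earlier coordinates, strictly decrease in coordinate $m$" argument is carried out in the ordinal arithmetic of $\omega^k$ correctly — the key fact is the Cantor-normal-form inequality that any ordinal below $\omega^{k-m}$ is strictly less than $\omega^{k-m}\cdot 1$, so a strict drop in a higher-significance coordinate is never undone by the lower-significance coordinates. The other point is to handle the case $\widehat{f_i}(x')=0$ inside \autoref{lem-ordinal-ranking} at index $m$, but that lemma already covers it. I expect the main (still routine) obstacle to be writing the ordinal comparison cleanly; everything else is a direct transcription of the multiphase proof with $\omega\cdot(k-i)$ replaced by the place-value structure of $\omega^k$. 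Finally, \autoref{lem-ranking} converts the existence of $\rho$ into well-foundedness of $\nu(\T)$, and since $\nu$ was arbitrary this shows the $k$-lexicographic ranking template is a linear ranking template. \qed
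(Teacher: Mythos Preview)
Your proposal is correct and follows essentially the same approach as the paper: the paper also checks the syntactic shape, defines $\rho(x)=\sum_{i=1}^k \omega^{k-i}\cdot\widehat{f_i}(x)$ with codomain $\omega^k$, picks a minimal decreasing index, uses \eqref{eq-rt-lex3} to show non-increase in the earlier coordinates, and concludes via the Cantor-normal-form comparison and \autoref{lem-ranking}. The only cosmetic differences are that the paper does not special-case states with some $f_i(x)\le 0$ (the formula already yields $0$ there via $\widehat{f_i}$), and it takes the minimal $i$ with $\widehat{f_i}(x')<\widehat{f_i}(x)$ rather than your minimal $m$ with $f_m(x')<f_m(x)-\delta_m$; both choices lead to the same inductive use of \eqref{eq-rt-lex3}.
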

\begin{proof}
The $k$-lexicographic ranking template conforms to the syntactic requirements to be a linear ranking template.
Let an assignment to the parameters $D$ and the affine-linear function symbols $F$ of the $k$-lexicographic template be given.
Consider the following ranking function with codomain $\omega^k$.
\begin{align}
\rho(x) = \sum_{i=1}^k \omega^{k-i} \cdot \widehat{f_i}(x)
\end{align}
Let $(x, x') \in \T$.
From \eqref{eq-rt-lex2} follows $f_j(x) > 0$ for all $j$, so $\rho(x) > 0$.
By \eqref{eq-rt-lex4} and \autoref{lem-ordinal-ranking},
there is a minimal $i$ such that $\widehat{f_i}(x') < \widehat{f_i}(x)$.
According to \eqref{eq-rt-lex3},
$\widehat{f_1}(x') \leq \widehat{f_1}(x)$ and hence
inductively $\widehat{f_j}(x') \leq \widehat{f_j}(x)$ for all $j < i$,
since $i$ was minimal.
\begin{align*}
\rho(x') &= \sum_{j=1}^k \omega^{k-j} \cdot \widehat{f_j}(x')
  \leq \sum_{j=1}^{i-1} \omega^{k-j} \cdot \widehat{f_j}(x)
    + \sum_{j=i}^k \omega^{k-j} \cdot \widehat{f_j}(x') \\
  &< \sum_{j=1}^{i-1} \omega^{k-j} \cdot \widehat{f_j}(x)
    + \omega^{k-i} \cdot \widehat{f_i}(x)
  \leq \rho(x)
\end{align*}
Therefore \autoref{lem-ranking} implies that $\T$ is well-founded. \qed
\end{proof}


\subsection{Composition of Templates}
\label{ssec-composition}

The multiphase ranking template, the piecewise ranking template, and the lexicographic ranking template defined in the previous sections
can be used as a 'construction kit' for more general linear ranking templates.
Each of our templates contains lower bounds (\eqref{eq-rt-multiphase2}, \eqref{eq-rt-pw3}, \eqref{eq-rt-pw4}, and \eqref{eq-rt-lex2})
and decreasing behavior (\eqref{eq-rt-multiphase3}, \eqref{eq-rt-multiphase4}, \eqref{eq-rt-pw2}, \eqref{eq-rt-lex3}, and \eqref{eq-rt-lex4}).
We can compose templates by replacing the lower bound conditions and decreasing behavior conditions to affine-linear function symbols in our linear ranking templates with the corresponding conditions of another template.
This is possible because linear ranking templates allow arbitrary boolean combination of inequalities and
are closed under this kind of substitution.
For example, we can construct a template for a lexicographic ranking function
whose lexicographic components are multiphase functions instead of affine-linear functions
(see \autoref{fig-rt-multilex}).
This encompasses the approach applied by Bradley et al.~\cite{BMS05polyrank}.

\begin{figure}[t]
\begin{align*}
&\bigwedge_{i=1}^{k} \bigwedge_{j=1}^\ell \delta_{i,j} > 0 \\
\land\; &\bigwedge_{i=1}^k \bigvee_{j=1}^\ell f_{i,j}(x) > 0 \\
\land\; &\bigwedge_{i=1}^{k-1} \Big(
  \Big( f_{i,1}(x') \leq f_{i,1}(x)
    \;\land\; \bigwedge_{j=2}^\ell \big( f_{i,j}(x') \leq f_{i,j}(x)
      \;\lor\; f_{i,j-1}(x) > 0 \big)
  \Big) \\
  &\;\lor\; \bigvee_{t=1}^{i-1}
  \Big( f_{t,1}(x') < f_{t,1}(x) - \delta_{t,1}
    \;\land\; \bigwedge_{j=2}^\ell \big( f_{t,j}(x') < f_{t,j}(x) - \delta_{t,j}
      \;\lor\; f_{t,j-1}(x) > 0 \big)
  \Big)
\Big) \\
\land\; &\bigvee_{i=1}^k \Big( f_{i,1}(x') < f_{i,1}(x) - \delta_{i,1}
  \;\land\; \bigwedge_{j=2}^\ell \big( f_{i,j}(x') < f_{i,j}(x) - \delta_{i,j}
    \;\lor\; f_{i,j-1}(x) > 0 \big)
\Big)
\end{align*}
\caption{
A $k$-lexicographic ranking template with $\ell$ phases in each lexicographic component
with the parameters$D = \{ \delta_{i,j} \}$ and
affine-linear function symbols $F = \{ f_{i,j} \}$.
}\label{fig-rt-multilex}
\end{figure}


\section{Synthesizing Ranking Functions}
\label{sec-rf-synthesis}

Our method for ranking function synthesis can be stated as follows.
We have a finite pool of linear ranking templates.
This pool will include the multiphase, piecewise, and lexicographic ranking templates in various sizes and possibly combinations thereof.
Given a linear loop program whose termination we want to prove, we select a linear ranking template from the pool.
With this template we build the constraint \eqref{eq-rt} to the linear ranking template's parameters.
If this constraint is satisfiable, this gives rise to a ranking function according to \autoref{lem-ranking}.
Otherwise, we try again using the next linear ranking template from the pool until the pool has been exhausted.
In this case, the given linear loop program does not have a ranking function of the form specified by
any of the pool's linear ranking templates and the proof of the program's termination failed.
See \autoref{fig-synthesis-method} for a specification of our method in pseudocode.

Following related approaches~\cite{ADFG10,BMS05linrank,BMS05polyrank,CSS03,HHLP13,PR04,Rybalchenko10,SSM04},
we transform the $\exists\forall$-con\-straint \eqref{eq-rt} into an $\exists$-constraint.
This transformation makes the constraint more easily solvable
because it reduces the number of non-linear operations in the constraint:
every application of an affine-linear function symbol $f$ corresponds to a non-linear term $\tr{s_f} x + t_f$.


\subsection{Constraint Transformation using Motzkin's Theorem}
\label{ssec-constraint}

Fix a linear loop program $\loopt$ and a linear ranking template $\T$ with parameters $D$ and affine-linear function symbols $F$.
We write $\loopt$ in disjunctive normal form and $\T$ in conjunctive normal form:
\begin{align*}
\loopt(x, x') &\equiv \bigvee_{i \in I} A_i \abovebelow{x}{x'} \leq b_i \\
\T(x, x') &\equiv \bigwedge_{j \in J} \bigvee_{\ell \in L_j} \T_{j,\ell}(x, x')
\end{align*}
We prove the termination of $\loopt$ by solving the constraint \eqref{eq-rt}.
This constraint is implicitly existentially quantified over the parameters $D$ and the parameters corresponding to the affine-linear function symbols $F$.
\begin{align}
\forall x, x'.\;
\Big(
\big( \bigvee_{i \in I} A_i \abovebelow{x}{x'} \leq b_i \big)
\rightarrow
\big( \bigwedge_{j \in J} \bigvee_{\ell \in L_j} \T_{j,\ell}(x, x') \big)
\Big)
\label{eq-constraint}
\end{align}
First, we transform the constraint \eqref{eq-constraint} into an equivalent constraint of the form required by Motzkin's Theorem.
\begin{align}
\bigwedge_{i \in I} \bigwedge_{j \in J}
\forall x, x'.\;
\neg \Big(
A_i \abovebelow{x}{x'} \leq b_i
\;\land\;
\big( \bigwedge_{\ell \in L_j} \neg\T_{j,\ell}(x, x') \big)
\Big)
\label{eq-constraint2}
\end{align}
Now, \hyperref[thm-motzkin]{Motzkin's Transposition Theorem} will
transform the constraint \eqref{eq-constraint2} into an equivalent existentially quantified constraint.

This $\exists$-constraint is then checked for satisfiability.
If an assignment is found, it gives rise to a ranking function.
Conversely, if no assignment exists, then there cannot be an instantiation of the linear ranking template and thus
no ranking function of the kind formalized by the linear ranking template.
In this sense our method is sound and complete.

\begin{figure}[t]
\begin{description}
\item[Input:] linear loop program $\loopt$ and a list of linear ranking templates $\mathcal{T}$
\item[Output:] a ranking function for $\loopt$ or \texttt{null} if none is found
\end{description}
\begin{center}
\vspace{-4mm}
\begin{minipage}{82mm}
\begin{lstlisting}
foreach $\T \in \mathcal{T}$ do:
    let $\varphi$ = $\forall x, x'.\; \big( \loopt(x, x') \rightarrow \T(x, x') \big)$
    let $\psi$ = transformWithMotzkin($\varphi$)
    if SMTsolver.checkSAT($\psi$):
        let ($D$, $F$) = $\T$.getParameters()
        let $\nu$ = getAssignment($\psi$, $D$, $F$)
        return $\T$.extractRankingFunction($\nu$)
return null
\end{lstlisting}
\end{minipage}
\vspace{-7mm}
\end{center}
\caption{
Our ranking function synthesis algorithm described in pseudocode.
The function \texttt{transformWithMotzkin} transforms the $\exists\forall$-constraint $\varphi$ into an $\exists$-constraint $\psi$ as described in \autoref{ssec-constraint}.
}\label{fig-synthesis-method}
\end{figure}

\begin{theorem}[Soundness]\label{thm-soundness}
If the transformed $\exists$-constraint is satisfiable,
then the linear loop program terminates.
\end{theorem}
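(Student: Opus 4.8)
The plan is to trace the chain of equivalences in \autoref{ssec-constraint} backwards, from the transformed $\exists$-constraint to the original validity statement \eqref{eq-rt}, and then invoke \autoref{lem-termination}. Concretely: suppose the $\exists$-constraint produced by \texttt{transformWithMotzkin} is satisfiable, say by some assignment $\nu$ to the parameters $D$ and $F$. This constraint is, by construction, the result of applying \hyperref[thm-motzkin]{Motzkin's Transposition Theorem} to each conjunct of \eqref{eq-constraint2}. I would first argue that each $\T_{j,\ell}$ is a single atom of the linear-ranking-template form, so that $\neg \T_{j,\ell}$ is again a (weak or strict) linear inequality in $x, x'$; hence the formula inside the negation in \eqref{eq-constraint2} is exactly a conjunction of weak and strict linear inequalities $A_i(^x_{x'}) \le b_i \land C_{i,j}(^x_{x'}) < d_{i,j}$ of the shape \eqref{eq-motzkin1}, with the coefficients being affine-linear (in fact bilinear) expressions in the parameters. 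Applying \autoref{thm-motzkin} with the parameters held fixed yields, for each $(i,j)$, an equivalence between the universally quantified negation and the existence of suitable Motzkin multipliers $\lambda, \mu$. The conjunction over all $(i,j)$ of these existential statements, together with the existential quantifier over $D$ and $F$, is precisely the $\exists$-constraint $\psi$.

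The second step is to undo the normal-form rewriting: \eqref{eq-constraint2} is logically equivalent to \eqref{eq-constraint}, which in turn is \eqref{eq-rt} for the chosen $\nu$. Here I would spell out the propositional manipulation: an implication $(\bigvee_i P_i) \to (\bigwedge_j \bigvee_\ell Q_{j,\ell})$ is equivalent to $\bigwedge_i \bigwedge_j \neg(P_i \land \bigwedge_\ell \neg Q_{j,\ell})$, and pushing the universal quantifier $\forall x,x'$ inside a conjunction is harmless. Since $\nu$ satisfies $\psi$, the Motzkin equivalence gives that \eqref{eq-constraint2} holds under $\nu$, hence \eqref{eq-rt} is valid under $\nu$. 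Now \autoref{lem-termination} applies directly — $\T$ is a linear ranking template, and we have an assignment $\nu$ making \eqref{eq-rt} valid — so $\loopt$ terminates.

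The conversely / completeness direction (no assignment $\Rightarrow$ no ranking function of that shape) is not needed for soundness, but since the paper claims it in the surrounding text I would note that it follows from the same chain of equivalences read in the other direction: unsatisfiability of $\psi$ forces \eqref{eq-rt} to fail for every $\nu$, hence no instantiation of $\T$ over-approximates $\loopt$, hence no ranking function of the form encoded by $\T$ exists.

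The main obstacle is the bookkeeping in the very first step: one must check carefully that writing $\loopt$ in DNF and $\T$ in CNF produces conjuncts that genuinely have the syntactic form \eqref{eq-motzkin1} — in particular that negating a template atom $(\cdots \rhd 0)$ with $\rhd \in \{\ge, >\}$ stays within weak/strict linear inequalities (negating $\ge$ gives $<$, negating $>$ gives $\le$, both fine), and that the quantifier in \autoref{thm-motzkin} ranges over exactly the program variables $x, x'$ while the template parameters sit in the coefficient matrices $A_i, C_{i,j}$ and vectors $b_i, d_{i,j}$ as (nonlinear) symbolic entries. Once that correspondence is pinned down, \autoref{thm-motzkin} does all the work and the rest is propositional rewriting plus a citation of \autoref{lem-termination}.
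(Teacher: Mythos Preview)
Your proposal is correct and matches the paper's intended argument. The paper itself does not give an explicit proof of \autoref{thm-soundness}; it simply states the theorem, relying on the preceding discussion in \autoref{ssec-constraint} where the equivalence chain \eqref{eq-constraint} $\Leftrightarrow$ \eqref{eq-constraint2} $\Leftrightarrow$ (Motzkin-transformed $\exists$-constraint) is set up, together with \autoref{lem-termination}. Your write-up spells out precisely this chain, including the bookkeeping (CNF atoms negate to linear inequalities, parameters sit in the coefficient data while Motzkin quantifies over $x,x'$) that the paper leaves implicit.
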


\begin{theorem}[Completeness]\label{thm-completeness}
If the $\exists\forall$-constraint \eqref{eq-rt} is satisfiable,
then so is the transformed $\exists$-constraint.
\end{theorem}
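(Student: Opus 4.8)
The plan is to track a satisfying parameter assignment for \eqref{eq-rt} through each stage of \texttt{transformWithMotzkin} — namely \eqref{eq-rt}, then \eqref{eq-constraint}, then \eqref{eq-constraint2}, and finally the $\exists$-constraint $\psi$ — and to verify it is never lost. The first two stages are propositional rewritings that are equivalences under every fixed assignment to $D$ and $F$, so the only real content sits in the last stage, and there the relevant direction of \autoref{thm-motzkin} is the implication \eqref{eq-motzkin1} $\Rightarrow$ \eqref{eq-motzkin2}. (Running the same argument with the reverse implication, together with \autoref{lem-termination}, gives \autoref{thm-soundness}.)

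First I would make the passage to \eqref{eq-constraint2} explicit. Writing $\loopt$ in disjunctive normal form and each $\nu(\T)$ in conjunctive normal form $\bigwedge_{j\in J}\bigvee_{\ell\in L_j}\T_{j,\ell}$, and using $(\bigvee_i p_i)\rightarrow(\bigwedge_j c_j)\equiv\bigwedge_{i,j}(p_i\rightarrow c_j)$ together with $(p_i\rightarrow\bigvee_{\ell}\T_{j,\ell})\equiv\neg(p_i\land\bigwedge_{\ell}\neg\T_{j,\ell})$ and distributing $\forall x,x'$ over the outer conjunction, \eqref{eq-rt} is, for every fixed $\nu$, logically equivalent to \eqref{eq-constraint2}. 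Here one uses that each disjunct $A_i\abovebelow{x}{x'}\le b_i$ of $\loopt$ is a system of affine-linear inequalities in $(x,x')$ with constant coefficients, and that, since $\T$ is a linear ranking template, each $\T_{j,\ell}$ is an atom of the form in \autoref{def-linear-rt}, i.e.\ $\sum_{f\in F}(\alpha_f f(x)+\beta_f f(x'))+\sum_{d\in D}\gamma_d d\rhd 0$ with $\rhd\in\{\ge,>\}$, so that $\neg\T_{j,\ell}$ is again a single affine-linear inequality, strict or non-strict. Substituting $f(x)=\tr{s_f}x+t_f$, the system $A_i\abovebelow{x}{x'}\le b_i\land\bigwedge_{\ell}\neg\T_{j,\ell}(x,x')$ takes exactly the shape $A\abovebelow{x}{x'}\le b\land C\abovebelow{x}{x'}<d$ required by \autoref{thm-motzkin}, where the entries of $A,b,C,d$ are affine-linear expressions in the template parameters $s_f,t_f$ $(f\in F)$ and $d$ $(d\in D)$ while the blocks coming from $\loopt$ contribute only constants; if no strict inequality survives this is the Farkas case, still covered by \autoref{thm-motzkin}.

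Next I would describe the output of \texttt{transformWithMotzkin} and run the argument. It applies \autoref{thm-motzkin} separately to each of the finitely many conjuncts of \eqref{eq-constraint2} indexed by $(i,j)\in I\times J$, introducing a fresh pair of multiplier vectors $\lambda_{i,j},\mu_{i,j}$ for each; the resulting $\psi$ is the conjunction over $(i,j)$ of the corresponding instances of \eqref{eq-motzkin2}, existentially quantified over all the $\lambda_{i,j},\mu_{i,j}$ as well as over $D$ and $F$, which still occur in $\psi$ through the affine-linear expressions forming the entries of $A,b,C,d$. Now suppose $\nu$ satisfies \eqref{eq-rt}, hence \eqref{eq-constraint2}. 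Fix $(i,j)$. Under $\nu$ the entries of $A,b,C,d$ are concrete values in $\mathbb{K}$ and the $(i,j)$-conjunct of \eqref{eq-constraint2} is precisely an instance of \eqref{eq-motzkin1}; by the \eqref{eq-motzkin1} $\Rightarrow$ \eqref{eq-motzkin2} direction of \autoref{thm-motzkin} there exist $\lambda_{i,j}\ge 0$ and $\mu_{i,j}\ge 0$ satisfying \eqref{eq-motzkin2}. Collecting these witnesses over all $(i,j)$ and keeping the values $\nu$ assigns to $D$ and $F$ yields an assignment satisfying $\psi$.

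The step I expect to need the most care is the bookkeeping in the previous paragraph: making precise that \autoref{thm-motzkin} is applied not once to a fixed linear system but symbolically to a family of systems parametrized by $D$ and $F$, so that the multipliers it produces become genuine new existential variables of $\psi$ beside the template parameters, that a separate fresh pair $\lambda_{i,j},\mu_{i,j}$ per conjunct is what is needed (no coordination between different $(i,j)$ beyond the shared, already-fixed $\nu$), and that the passage from the relations $\rhd\in\{\ge,>\}$ of \autoref{def-linear-rt} to the strict/non-strict split of the Motzkin normal form keeps the body of \eqref{eq-constraint2} inside the admissible shape. Once this is fixed, completeness follows at once, since each of the three transformation steps is an equivalence and the direction of Motzkin's Theorem that is invoked is exactly \eqref{eq-motzkin1} $\Rightarrow$ \eqref{eq-motzkin2}.
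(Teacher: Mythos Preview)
Your proposal is correct and follows essentially the same route as the paper. The paper does not give an explicit proof of \autoref{thm-completeness}; the justification is the discussion in \autoref{ssec-constraint}, which observes that \eqref{eq-rt}, \eqref{eq-constraint}, and \eqref{eq-constraint2} are logically equivalent rewritings and that \autoref{thm-motzkin} then turns each conjunct of \eqref{eq-constraint2} into an equivalent existential statement. Your write-up is a careful elaboration of exactly this argument, including the bookkeeping (fresh multipliers $\lambda_{i,j},\mu_{i,j}$ per conjunct, the shape check that $\neg\T_{j,\ell}$ is again a single affine-linear inequality, and the observation that the relevant direction of Motzkin is \eqref{eq-motzkin1}$\Rightarrow$\eqref{eq-motzkin2}) that the paper leaves implicit.
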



\section{Related Work}
\label{sec-related-work}

The first complete method of ranking function synthesis for linear loop programs through constraint solving was due Podelski and Ry\-bal\-chen\-ko~\cite{PR04}.
Their approach considers termination arguments in form of affine-linear ranking functions and requires only linear constraint solving.
We explained the relation to their method in \autoref{ex-rt-pr}.

Bradley, Manna, and Sipma propose a related approach for linear lasso
programs~\cite{BMS05linrank}.
They introduce affine-linear inductive supporting invariants to handle the stem.
Their termination argument is a lexicographic ranking function with each component corresponding to one loop disjunct.
This not only requires non-linear constraint solving, but also an ordering on the loop disjuncts.
The authors extend this approach in \cite{BMS05polyrank} by the use of \emph{template trees}.
These trees allow each lexicographic component to have a ranking function that decreases not necessarily in every step, but \emph{eventually}.

In \cite{HHLP13} the method of Podelski and Rybalchenko is extended.
Utilizing supporting invariants analogously to Bradley et al.,
affine-linear ranking functions are synthesized.
Due to the restriction to non-decreasing invariants,
the generated constraints are linear.

A collection of example-based explanations of constraint-based
verification techniques can be found in \cite{Rybalchenko10}.  This
includes the generation of ranking functions, interpolants,
invariants, resource bounds and recurrence sets.

In \cite{BAG13} Ben-Amram and Genaim discuss the synthesis of affine-linear and lexicographic
ranking functions for linear loop programs over the integers.
They prove that this problem is generally co-NP-complete and show that
several special cases admit a polynomial time complexity.


\nocite{Leike13}

\bibliographystyle{abbrv}
\bibliography{references}

\end{document}